\DeclareMathOperator{\okr}{{\stackrel{{\scriptscriptstyle{\mathsf{def}}}}{=}}}
\DeclareMathOperator{\D}{d\!}
\theoremstyle{plain}
\newtheorem*{thmm}{Theorem}
\newtheorem*{exaa}{Example}
\numberwithin{equation}{section}
\theoremstyle{remark}
\newtheorem*{remark}{{\it Remark}}
\begin{document} 

%\allowdisplaybreaks

\makeatletter

\title{A note on the article "Anomalous relaxation model based on the fractional derivative with a Prabhakar-like  kernel" [Z. Angew. Math. Phys. (2019) 70: 42]}

\author{K. G\'{o}rska}
\author{A. Horzela}

\affiliation{H. Niewodnicza\'{n}ski Institute of Nuclear Physics, Polish Academy of Sciences,  ul.Eljasza-Radzikowskiego 152, PL 31342 Krak\'{o}w, Poland \vspace{2mm}}

\author{T. K. Pog\'{a}ny}
\affiliation
{Faculty of Maritime Studies, University of Rijeka, Studentska 2, HR-51000 Rijeka, Croatia\\ Institute of Applied Mathematics, \'Obuda University, B\'ecsi \'ut 96/b, H-1034 Budapest, Hungary}

\email{katarzyna.gorska@ifj.edu.pl; andrzej.horzela@ifj.edu.pl; poganj@pfri.hr}

\begin{abstract}
Inspired by the article "Anomalous relaxation model based on the fractional derivative 
with a Prabhakar-like kernel" (Z. Angew. Math. Phys. (2019) 70:42) which authors D. Zhao  and HG. Sun studied the integro-differential equation with the kernel given by the Prabhakar function  $e^{-\gamma}_{\alpha, \beta}(t, \lambda)$ we provide the solution to this equation which is complementary to that obtained up to now. Our solution is valid for effective relaxation times which admissible range extends the limits given in \cite[Theorem 3.1]{DZhao2019} to all positive values. For special choices of parameters entering the equation itself and/or characterizing the kernel the solution comprises to known  phenomenological relaxation patterns, e.g. to the Cole-Cole model (if $\gamma = 1, \beta=1-\alpha$) or to  the standard Debye relaxation. 
\end{abstract}

%\allowdisplaybreaks

\maketitle

\section{Introduction}

In the recently published article \cite{DZhao2019} its authors Dazhi Zhao and HongGuang Sun 
studied the linear integro--differential equation
   \begin{equation}\label{30/06-1}
      \int_{0}^{t} e^{-\gamma}_{\alpha, \beta}(t-t^{\prime}, \lambda) \frac{\D}{\D t'} 
			f(t^{\prime}) \D t' = - M(\tau, \alpha) f(t) 
   \end{equation}
where the kernel $k(t; \alpha) = e^{-\gamma}_{\alpha, \beta}(t; \lambda)$ is given by the 
Prabhakar function which parameters satisfy $0<\gamma\leq 1$ and $\alpha, \beta > 0$, 
$\alpha + \beta = 1$. For this range of parameters recall that the Laplace transform 
of  $k(t; \alpha)$,  namely $K(s, \alpha) = s^{-\alpha\gamma-\beta}(s^{\alpha}-\lambda)^{\gamma}$, 
satisfies the condition $\lim_{s\to\infty}[s K(s, \alpha)]^{-1} = 0$, which according to 
\cite[Eq. (2) {\it et seq.}]{DZhao2019} permits to qualify the integro-differential operator in 
Eq. \eqref{30/06-1} as the so-called generalized Caputo (GC) derivative. Here $M(\tau, \alpha)$ stands for  
$\Lambda(\tau, \alpha)/N(\alpha)$ where $N(\alpha) = (1-\alpha)^{-1}$ normalizes the integral in 
Eq. \eqref{30/06-1} and $\Lambda(\tau, \alpha)$ is a function of the effective relaxation time $\tau$.

Considering Eq. \eqref{30/06-1} as a model of  the anomalous relaxation and solving it the authors of 
\cite{DZhao2019} showed that the model extends the Cole--Cole relaxation pattern and contains as 
the limiting case $\alpha\to 1$ the standard Debye relaxation. Here we would like to emphasize that just mentioned two cases do not exhaust possible mutual relations which link the relaxation phenomena and  using the Eq. \eqref{30/06-1} for  modeling their time behavior. 
An instructive example is an application of Eq. \eqref{30/06-1}-like equation to describe the Havriliak--Negami relaxation, the most widely used "asymmetric" generalization of the Debye and Cole--Cole approaches. In the review paper \cite{FMainardi} the  
authors presented a detailed analysis of equations describing the time behavior of the 
Havriliak--Negami relaxation function $\Psi_{\alpha, \gamma}(t)$. They came to the conclusion that it  
is governed by a non-homogenous equation
   \[ {^{C}(_{0}D_{t}^{\alpha} + \tau^{-\alpha})}^{\gamma} \Psi_{\alpha, \gamma}(t) 
			    = -\tau^{-\alpha\gamma}, \qquad \Psi_{\alpha, \gamma}(0) = 1, \]
where the pseudo-differential operator ${^{C}(_{0}D_{t}^{\alpha} + \tau^{-\alpha})}^{\gamma}$ is a 
Caputo-like counterpart of the operator $(_{0}D_{t}^{\alpha} + \tau^{-\alpha})^{\gamma}$, the latter 
understood as an infinite binomial series of the Riemann-Liouville fractional derivatives
\footnote{For a comprehensive information about ${^{C}(_{0}D_{t}^{\alpha} + \tau^{-\alpha})}^{\gamma}$ 
see \cite[Section 3.3, Appendix B]{FMainardi}.}. Next, using results of \cite{RGarra14}, 
they argued that the operator ${^{C}({_{0}D_{t}}^{\alpha} + \tau^{-\alpha})}^{\gamma}$ may be 
represented in terms of an integro-differential operator involving the Prabhakar function in the 
kernel, the object usually nick-named the Prabhakar derivative. Adjusted to our notation the suitable 
equations \cite[Eq. (B.23)]{FMainardi} read 
   \begin{align*}
     \begin{split}
                {^{C}(_{0}D_{t}^{\alpha} + \tau^{-\alpha})}^{\gamma} \Psi_{\alpha, \gamma}(t) 
				&\equiv e^{-\gamma}_{\alpha, 1 - \alpha\gamma}(t; \lambda) \star 
				        \frac{\D}{\D t} \Psi_{\alpha, \gamma}(t) \\
        & = \int_{0}^{t} e^{-\gamma}_{\alpha, 1-\alpha\gamma}(t-u; \lambda) 
				       \Psi^{\prime}_{\alpha, \gamma}(u) \D u,
     \end{split}
   \end{align*} 
where $\star$ denotes the convolution operator. 
This justifies the condition $\beta = 1-\alpha\gamma$ to appear in Eq. \eqref{30/06-1} as meaningful for 
understanding properties of physically admissible relaxation models. In \cite{RGarra18} it has been also shown 
that the nonlinear heat conduction equations with memory involving Prabhakar derivative can 
be characterized by Eq. \eqref{30/06-1} in which $\beta = 1-\alpha\gamma$. 

The Laplace transform method applied to Eq. \eqref{30/06-1}  results in $F(s) =  f(0+) H(s, \alpha)$, where 
   \begin{equation}\label{5/07-3}
      H(s, \alpha) = \dfrac{K(s, \alpha)}{s K(s, \alpha) + M(\tau, \alpha)},
   \end{equation}
in which the inverse Laplace transform of $F(s)$, denoted as $f(t)$, satisfies 
$\lim_{t \to \infty} f(t) < \infty$. In what follows 
   \begin{equation} \label{constraint}
	    f(0+) \equiv 1 
	 \end{equation} 
will be used throughout, since this constraint neither harms nor restricts our further considerations. 
In \cite{DZhao2019} the authors used the fact that the inverse Laplace transform of the geometric series (which results after 
pulling out $K(s, \alpha)$ in the nominator and denominator of Eq. \eqref{5/07-3} and subsequently 
reducing it) may be performed termwise. This leads to their main result formulated as 
\cite[Theorem 3.3]{DZhao2019}
   \begin{equation}\label{5/07-4}
      f(t) = \sum_{r\geq 0} (-1)^{r} M^r(\tau, \alpha)\, 
			       e^{r\gamma}_{\alpha, 1+ r(1-\beta)}(t; \lambda)\,, 
   \end{equation}
for $|M(\tau, \alpha)/[s K(s, \alpha)]| < 1$, bearing in mind Eq. \eqref{constraint}. The aim of our 
note is to show that just given restriction is not mandatory to solve the Eq. \eqref{30/06-1} as we 
can consider the inverse Laplace transform of Eq. \eqref{5/07-3}, namely the function $f(t)$, also 
for $|M(\tau, \alpha)/[s K(s, \alpha)]| > 1$. 

The note is organized as follows: we begin with a few less known remarks on 
the properties of the Prabhakar function with negative upper index, next show how to find the 
solution for  $|M(\tau, \alpha)/[s K(s, \alpha)]| > 1$ and complete the paper with remarks 
concerning relations between the standard Cole--Cole model and the solution to the Eq. \eqref{30/06-1}. 
We also comment how the results of  \cite{DZhao2019} and this work are viewed in the light of 
general approach proposed in \cite{Kochubei2011}.

\section{The Prabhakar function}

The Prabhakar function \cite{TRPrabhakar69}
   \begin{equation}\label{27/07-2}
      e^{\gamma}_{\alpha, \beta}(t, \lambda) \okr t^{\beta - 1} E^{\gamma}_{\alpha, \beta}
			           (\lambda t^{\alpha})
   \end{equation}
is expressed by the three parameters Mittag-Leffler function $E^{\gamma}_{\alpha, \beta}
(\lambda t^{\alpha})$ defined by the series \cite[p. 7, Eq. (1.3)]{TRPrabhakar69}
    \[ E^{\gamma}_{\alpha, \beta}(x) = \sum_{r \geq 0}\dfrac{(\gamma)_r\, x^r}
		                                {r!\, \Gamma(\alpha r + \beta)}, 
																		\qquad \Re(\alpha)>0;\, \beta, \mu \in \mathbb C;\]
here $(\gamma)_r = \Gamma(\gamma+r)/\Gamma(\gamma)$ stands for the familiar Pochhammer symbol. 
If $\gamma = -n$, $n$ positive integer, the three parameter Mittag--Leffler function is given through 
hypergeometric type polynomial
   \begin{equation}\label{27/07-1}
      E^{\,-n}_{\alpha, \beta}(x) = \dfrac{1}{\Gamma(\beta)} \sum_{k=0}^{n} 
				                            \dfrac{(-n)_{k}}{(\beta)_{\alpha k}} \dfrac{x^{k}}{k!} 
                                  = \dfrac1{\Gamma(\beta)}\, {}_1\Psi_1 \Big[ \begin{array}{c} (-n, 1)\\ 
						                        (\beta, \alpha) \end{array} \Big| \, x \Big]. 
   \end{equation}
For positive integer $\alpha$ they are the biorthogonal polynomials pairs  discussed in 
\cite{TRPrabhakar69, JDEKonhauser67, HMSrivastava82}; the polynomials with general 
values of $\alpha > 0$ are mentioned in \cite{RGarra18}. Here ${_1\Psi_1}$ stands for the confluent 
generalized hypergeometric function, see for instance \cite[p. 21]{SriKar}. The particular case 
of Eq. \eqref{27/07-1} for $n=1$ reads
   \begin{equation}\label{27/07-3}
      E^{-1}_{\alpha, \beta}(x) = \dfrac1{\Gamma(\beta)} + \dfrac{x}{\Gamma(\alpha + \beta)}.
   \end{equation}
This expression will be used in the {\em Remark} which closes the next section and 
enables a comment on the relation between Eq. \eqref{30/06-1} and the Cole--Cole relaxation model.

\section{Alternative solution of Eq. \eqref{30/06-1}}

As previously mentioned the case when $|M(\tau, \alpha)/[s K(s, \alpha)]| > 1$ has not 
been included in considerations presented in  \cite{DZhao2019}. To fill this gap we shall proceed in 
an analogous way and formulate 

\begin{thmm}\label{t5/07-1}
For $|M(\tau, \alpha)/[s K(s, \alpha)]| > 1$ the solution of {\rm Eq.} \eqref{30/06-1} becomes
   \begin{equation}\label{5/07-5}
      f(t) =  \dfrac1{M(\tau, \alpha)} \sum_{r \geq 0} \dfrac{(-1)^r}{M^r(\tau, \alpha)}\, 
			        e^{-(1+r)\gamma}_{\alpha, 1- (1+r)(1-\beta)}(t; \lambda) .
   \end{equation}
\end{thmm}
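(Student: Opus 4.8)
The plan is to start from the Laplace-transform representation of the solution that the paper has already established, namely $F(s) = f(0+)\,H(s,\alpha)$ with $H(s,\alpha) = K(s,\alpha)/[sK(s,\alpha)+M(\tau,\alpha)]$, and to exploit the constraint $f(0+)\equiv 1$ from Eq.~\eqref{constraint}. The key observation, which distinguishes this case from that of \cite{DZhao2019}, is the regime $|M(\tau,\alpha)/[sK(s,\alpha)]|>1$. In the complementary regime the authors of \cite{DZhao2019} factored $K(s,\alpha)$ out and expanded a geometric series in the small quantity $M/[sK]$; here, since this quantity exceeds unity, I would instead expand in its reciprocal $sK(s,\alpha)/M(\tau,\alpha)$, which is now the small parameter.

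Concretely, I would rewrite
\begin{equation*}
H(s,\alpha) = \frac{K(s,\alpha)}{M(\tau,\alpha)}\cdot\frac{1}{1 + sK(s,\alpha)/M(\tau,\alpha)}
= \frac{K(s,\alpha)}{M(\tau,\alpha)}\sum_{r\ge 0}(-1)^r\left(\frac{sK(s,\alpha)}{M(\tau,\alpha)}\right)^{r},
\end{equation*}
valid precisely when $|sK(s,\alpha)/M(\tau,\alpha)|<1$, i.e.\ exactly the stated condition. Inserting the explicit Laplace transform $K(s,\alpha) = s^{-\alpha\gamma-\beta}(s^{\alpha}-\lambda)^{\gamma}$ and collecting powers of $s$ gives a typical term proportional to $s^{\,r}\,[K(s,\alpha)]^{\,1+r}/M^{\,1+r} = s^{\,r}\,s^{-(1+r)(\alpha\gamma+\beta)}(s^{\alpha}-\lambda)^{(1+r)\gamma}/M^{1+r}$. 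The next step is to recognize each such term as the Laplace transform of a Prabhakar function with a \emph{negative} upper index, using the standard transform pair $e^{\delta}_{\alpha,\nu}(t;\lambda)\leftrightarrow s^{-\nu}(s^{\alpha}-\lambda)^{-\delta}$ read with $\delta = -(1+r)\gamma$. Matching exponents, the $s$-power is $s^{-\nu}$ with $\nu = (1+r)(\alpha\gamma+\beta)-r = 1-(1+r)(1-\beta) + (1+r)\alpha\gamma$, and using $\alpha+\beta=1$ to simplify one should land on the second index $1-(1+r)(1-\beta)$ and upper index $-(1+r)\gamma$ appearing in Eq.~\eqref{5/07-5}.

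The main obstacle—and the reason the preparatory Section~2 on the Prabhakar function with negative upper index is needed—is justifying the termwise inversion and correctly handling the negative-index Prabhakar functions. Each series coefficient carries $(-1)^r/M^{r}$ together with the overall prefactor $1/M$, matching Eq.~\eqref{5/07-5} exactly; the bookkeeping of signs and of the index shifts under $\alpha+\beta=1$ must be done carefully so that $1-(1+r)(1-\beta)$ and $-(1+r)\gamma$ emerge precisely. I would verify convergence of the resulting series and the legitimacy of interchanging summation with the inverse Laplace transform in the region $|sK(s,\alpha)/M(\tau,\alpha)|<1$, appealing to the representation \eqref{27/07-2} and the polynomial/hypergeometric form \eqref{27/07-1}, which guarantee that the negative-index Prabhakar functions $e^{-(1+r)\gamma}_{\alpha,\,1-(1+r)(1-\beta)}(t;\lambda)$ are well defined. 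Finally I would confirm that the resulting $f(t)$ respects $f(0+)=1$ and that $\lim_{t\to\infty}f(t)<\infty$, consistent with the hypotheses following Eq.~\eqref{5/07-3}, thereby establishing Eq.~\eqref{5/07-5}.
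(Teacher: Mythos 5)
Your overall strategy is exactly the paper's: pull $M(\tau,\alpha)$ out of the denominator of $H(s,\alpha)$, expand the geometric series in the now-small quantity $sK(s,\alpha)/M(\tau,\alpha)$ (valid precisely when $|M(\tau,\alpha)/[sK(s,\alpha)]|>1$), and invert termwise by recognizing each term $s^{r}K^{1+r}(s,\alpha)/M^{1+r}(\tau,\alpha)$ as the Laplace transform of a Prabhakar function. However, the key identification step as you wrote it fails. The transform pair you invoke, $e^{\delta}_{\alpha,\nu}(t;\lambda)\leftrightarrow s^{-\nu}(s^{\alpha}-\lambda)^{-\delta}$, is not correct: the correct pair is
\begin{equation*}
\mathcal{L}\big[e^{\delta}_{\alpha,\nu}(t;\lambda)\big](s) \;=\; s^{-\nu}\big(1-\lambda s^{-\alpha}\big)^{-\delta} \;=\; s^{\alpha\delta-\nu}\,(s^{\alpha}-\lambda)^{-\delta},
\end{equation*}
i.e.\ it carries the extra factor $s^{\alpha\delta}$. (You can check this against the paper's own statement that the kernel $e^{-\gamma}_{\alpha,\beta}$ has transform $K(s,\alpha)=s^{-\alpha\gamma-\beta}(s^{\alpha}-\lambda)^{\gamma}$; your pair would instead give $s^{-\beta}(s^{\alpha}-\lambda)^{\gamma}$.) As a consequence, your exponent matching produces $\nu = 1-(1+r)(1-\beta)+(1+r)\alpha\gamma$, and the surplus term $(1+r)\alpha\gamma$ does \emph{not} disappear ``using $\alpha+\beta=1$''; it would vanish only if $\alpha\gamma=0$, which is excluded since $\alpha,\gamma>0$. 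So, as written, your series inverts to Prabhakar functions with the wrong second index and does not yield Eq.~\eqref{5/07-5}.

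The repair is immediate and brings you back to the paper's proof. With the correct pair, set $\delta=-(1+r)\gamma$ and match $s^{\alpha\delta-\nu}=s^{\,r-(1+r)(\alpha\gamma+\beta)}$, which gives $\nu=(1+r)\beta-r=1-(1+r)(1-\beta)$ \emph{identically} --- no use of $\alpha+\beta=1$ is needed at all in this step. Equivalently, factor $(s^{\alpha}-\lambda)^{(1+r)\gamma}=s^{(1+r)\alpha\gamma}\,(1-\lambda s^{-\alpha})^{(1+r)\gamma}$, so that
\begin{equation*}
s^{r}K^{1+r}(s,\alpha)\;=\;s^{-[1-(1+r)(1-\beta)]}\big(1-\lambda s^{-\alpha}\big)^{(1+r)\gamma}
\;=\;\mathcal{L}\big[e^{-(1+r)\gamma}_{\alpha,\,1-(1+r)(1-\beta)}(t;\lambda)\big](s).
\end{equation*}
One further minor point: your appeal to Eq.~\eqref{27/07-1} to guarantee that the negative-upper-index Prabhakar functions are well defined is not apt, since $-(1+r)\gamma$ need not be a negative integer for $0<\gamma\le 1$; well-definedness holds simply because $E^{\delta}_{\alpha,\nu}$ is an entire function of its argument for arbitrary complex upper index $\delta$ (the polynomial case \eqref{27/07-1} is needed in the paper only for the closing \emph{Remark} on the Cole--Cole limit, not for the Theorem).
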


\begin{proof}
First we pull out $M(\tau, \alpha)$ in the denominator of $H(s, \alpha)$ given by Eq. \eqref{5/07-3}. 
Thus it can be rewritten in the form
   \begin{equation}\label{5/07-6}
      H(s, \alpha) = \frac{K(s, \alpha)}{M(\tau, \alpha)} \left[1 
			             + \frac{s K(s, \alpha)}{M(\tau, \alpha)}\right]^{-1}.
   \end{equation}
Next, after applying the series expansion of $(1+x)^{-1}=\sum_{r\geq 0}(-x)^{r}$ for $|x|<1$, the 
Eq. \eqref{5/07-6} with $x = s K(s, \alpha)/M(\tau, \alpha)$ can be expressed as
   \begin{equation}\label{5/07-7}
      H(s, \alpha) = \sum_{r\geq 0} (-1)^{r} M^{-1-r}(\tau, \alpha) s^{r} K^{1+r}(s, \alpha).
   \end{equation}
The condition $|x|<1$ means that $|M(\tau, \alpha)/s K(s, \alpha)| > 1$. Substituting the explicit 
form of $K(s, \alpha)$ given below Eq. \eqref{30/06-1} into Eq. \eqref{5/07-7} we obtain Eq. 
\eqref{5/07-5}, as $f(0+) = 1$. That finishes the  proof.
\end{proof}

\begin{exaa}
{\rm Taking the same values of parameters $M(\tau, \alpha)$ and $\gamma = 1$ as in \cite[p. 42, 
{\em Example} 3.4]{DZhao2019} {the constraint 
   \[ |M(\tau, \alpha)/[s K(s, \alpha)]| > 1\]}
used to get \eqref{5/07-5} gives different, but 
complementary restriction on $\tau$ from that found in \cite{DZhao2019}. Namely, we get 
$\tau< (1-\alpha)^{2}/(b \alpha)$ while in \cite{DZhao2019} one finds $\tau> (1-\alpha)^{2}/(b \alpha)$; 
both conditions merged together cover the admissible range of $\tau$. To provide numerical estimations 
we take  $b=1$, $\alpha = 0.5$ and $\alpha = 0.7$ which leads to  $\tau <1/2$ and $\tau < 9/70$, 
respectively. This means that with growing $\alpha$ our solution \eqref{5/07-5} works for shorter 
and shorter characteristic relaxation times $\tau$'s, while for $\alpha$ close to $0$ it covers almost 
all range of $\tau$. 
$\blacksquare$}
\end{exaa}

For the values of parameters listed in the example above, i.e. 
$\gamma = 1$, $M=(1-\alpha)/\tau$, $\lambda = -b \alpha/(1-\alpha)$, and $K(s, \alpha) = 
s^{-1} (s^{\alpha} - \lambda)$, the Eq. \eqref{5/07-7} reads
   \begin{align}\label{12/07-0}
      H(s, \alpha) &= \frac{s^{\alpha} - \lambda}{s M(\tau, \alpha)} \sum_{r\geq 0} 
			                \left[-\frac{s^{\alpha} - \lambda}{M(\tau, \alpha)}\right]^{r} \nonumber \\
                   &= \frac{s^{\alpha - 1}}{s^{\alpha} + M(\tau, \alpha) - \lambda} 
									  - \frac{\lambda s^{-1}}{s^{\alpha} + M(\tau, \alpha) - \lambda}
   \end{align}
which is satisfied for $\tau < (1-\alpha)^{2}/(b\alpha)$. The same results can be obtained by 
using Eq. \eqref{5/07-4}, i.e. \cite[Theorem 3.1]{DZhao2019}, but, now, for 
$\tau > (1-\alpha)^{2}/(b\alpha)$. This suggest that to have Eq. \eqref{12/07-0} satisfied 
we do not need to put any additional constraint on $\tau$ except of its positivity. Indeed, 
Eq. \eqref{5/07-3} valid for $\tau > 0$ is equal to Eq. \eqref{12/07-0}. Hence, from the Laplace 
transform of the three parameters Mittag-Leffler function (recalling that $f(0+)=1$)  we conclude 
   \begin{equation}\label{12/07-1}
      f(t) = E_{\alpha}\big(-[M(\tau, \alpha) - \lambda] t^{\alpha}\big) 
           - \lambda t^{\alpha} E_{\alpha, 1 +\alpha} \big(-[M(\tau, \alpha) - \lambda] t^{\alpha}\big)\,,
   \end{equation}
which, after using the suitable property of the Mittag--Leffler functions (see 
\cite[Eq. (4.2.3)]{ML2014}) implies 
   \begin{equation}\label{12/07-2}
      f(t) = \dfrac{M(\tau, \alpha)}{M(\tau, \alpha) - \lambda} E_{\alpha}\big(-[M(\tau, \alpha) 
			     - \lambda] t^{\alpha}\big) - \dfrac{\lambda}{M(\tau, \alpha) - \lambda}\,.
   \end{equation}
Thus, \cite[Eq. (19)]{DZhao2019} can be treated as the approximation of exact solution given by 
Eq. \eqref{12/07-1} or Eq. \eqref{12/07-2}.

\begin{remark}
Eq. \eqref{30/06-1} for $\gamma = 1$ in which we applied Eqs. \eqref{27/07-2} and 
\eqref{27/07-3} can be written as
   \[ {^{C}\!D_{t}^{1-\beta}}f(t) + \lambda\cdot {^{C}\!D_{t}^{1-\alpha-\beta}}f(t) 
			     = - M(\tau, \alpha) f(t), \]
{where for an $\eta$ suitable, 
   \[ {^{C}\!D_{t}^{\eta}}f(t) = \dfrac1{\Gamma(1-\eta)} \int_{0}^{t} (t-u)^{-\eta} f'(u) \D u \] 
stands for} the Caputo fractional derivative. For $\beta = 1 -\alpha$ we get
   \begin{equation}\label{12/07-4}
      {^{C}\!D_{t}^{\alpha}}f(t) + [M(\tau, \alpha) - \lambda]f(t) = \lambda f(0+) = \lambda\,,
   \end{equation}
whose solution coincides with \eqref{12/07-1}, see e.g. \cite{IPodlubny99, BJWest10, KGorska12}. 
For $\lambda = 0$ the Eq. \eqref{12/07-4} becomes the equation relevant for the Cole-Cole 
relaxation. Simultaneously, we have the relation $e^{-1}_{\alpha, 1 -\alpha}(t; 0) = 
\frac{t^{-\alpha}}{\Gamma(1-\alpha)}$ easily seen from the Eq. \eqref{27/07-3} for $\lambda = 0$. 
It implies that the Prabhakar derivative becomes Caputo fractional 
derivative and Eq. \eqref{30/06-1} tends to the evolution equation describing the Cole--Cole 
relaxation. 
\end{remark}

\section{Conclusion}

We would like to point out that our result is complementary to the result given in 
\cite[Theorem 3.1]{DZhao2019} and extends it to the full range of $\tau>0$. This places it 
within the general scheme developed by A. N. Kochubei \cite{Kochubei2011} who 
investigated the Cauchy problem for evolution equations
   \begin{equation}\label{25/07-2}
      (D^{GC}_{t} f)(t) = -M(\tau, \alpha) f(t).
   \end{equation} 
governed by the integro-differential operator 
   \[ (D^{GC}_{t}f)(t) = \frac{\D}{\D \tau}\int_{0}^{t} k(t-\tau, \alpha) 
			                   f(\tau) \D\tau - k(t)f(0).\]
In addition some requirements are put on the Laplace transform $K(s,\alpha)$ of the kernel 
$k(t,\alpha)$. {Namely, it belongs to the Stieltjes class and satisfy the 
following asymptotic conditions: if $s\to 0$ then $K(s,\alpha) \to \infty$ and $sK(s,\alpha) \to 0$, 
while in the case $s \to \infty$,  
there hold $K(s,\alpha)\to 0$ and $sK(s,\alpha)\to\infty$. For instance, under this study all these 
conditions are satisfied and according to} \cite[Theorem 2]{Kochubei2011} the solution $f(t)$ is continuous 
on $[0,\infty)$, infinitely differentiable and completely monotone on $(0,\infty)$. 

Physical usefulness of the Eq. \eqref{30/06-1} as a tool to develop a description of the 
anomalous relaxation patterns is rooted in its relation to the Cole-Cole and Debye models. 
The first case has been just discussed  in the above.  The Debye relaxation emerges when 
$K(s, \alpha)$ is a constant and consequently $k(t) = B(\alpha) \delta(t)$. It is seen 
from
   \begin{equation}\label{2/03-3}
      H(s, \alpha) = \frac{B(\alpha)}{\Lambda(\tau, \alpha)} \left[1 + s \frac{B(\alpha)}
			              {\Lambda(\tau, \alpha)}\right]^{-1},
   \end{equation}
obtained either from \eqref{5/07-4} or \eqref{5/07-5}. Calculating the inverse Laplace transform 
of \eqref{2/03-3} we obtain the solution of Eq. \eqref{25/07-2} in the form
   \[ f(t) = \exp\left[- \frac{\Lambda(\tau, \alpha)}{B(\alpha)} t\right] ,\]
which is the Debye relaxation function in time domain. 

\section*{Acknowledgements} The research of K. G. and A. H. was supported by the Polish 
National Center for Science (NCN) research grant OPUS12 no. UMO-2016/23/B/ST3/01714. K. G. and 
T. K. P acknowledge the support of NAWA (National Agency for Academic Exchange, Poland):  K.G. in 
the framework of the  Bekker Project (PPN/BEK/2018/1/00184) which provided her the opportunity to complete this work during 
the stay in the ENEA Research Center Frascati while T. K. P. under the project 
PROM PPI/PRO/2018/1/00008. T. K. P. also thanks the INP PAS for the warm hospitality during his 
stay in Krak\'{o}w, Poland.

\end{document}